\def\ba #1\ea{\begin{align} #1 \end{align}}
\def\bas #1\eas{\begin{align*} #1 \end{align*}}
\def\bml #1\eml{\begin{multline} #1 \end{multline}}
\def\bmls #1\emls{\begin{multline*} #1 \end{multline*}}
\newtheorem{thm}{Theorem}[section]
\newtheorem{lem}[thm]{Lemma}
\newtheorem{algorithm}[thm]{Algorithm}
\theoremstyle{remark}
\newtheorem{eg}[thm]{Example}
\theoremstyle{definition}
\newtheorem{dfn}[thm]{Definition}
\title{Isolated loops}
\author{Caroline Mosko}
\date{}
\begin{document}

\maketitle

\begin{abstract}
    Many bureaucratic and industrial processes involve decision points where an object can be sent to a variety of different stations based on certain preconditions. Consider for example a visa application that has needs to be checked at various stages, and move to different stations based on the outcomes of said checks. While the individual decision points in these processes are well defined, in a complicated system, it is hard to understand the redundancies that can be introduced globally by composing a number of these decisions locally. In this paper, we model these processes as Eulerian paths and give an algorithm for calculating a measure of these redundancies, called isolated loops, as a type of loop count on Eulerian paths, and give a bound on this quantity.
\end{abstract}

Many processes involve a person or a file or a manufactured item traveling from station to station, or traveling through a series of states in order to be completed. Some of these may be simple processes with a few clearly defined steps, such as the steps needed to apply for a drivers license. Most people follow the same steps in the same sequence, but certain classes of people have to go through alternate processes (for example, first time applicants or older drivers, or drivers with past poor records). Other processes are inherently complicated, with many stages that need to be completed, and with many points where decisions have to be made as to what the next step in the process for a particular case should be. For instance, a manufactured item may undergo multiple quality control steps, where it is pulled aside for irregularities, processed separately, then returned to an earlier stage in the processing pipeline to address any safety concerns.  Similarly, bureaucratic processes, such a visa application,  have many decision points where the authorities may wish to investigate personal details or request additional information. At each of these decision points, there are multiple possible next steps the application may need to undergo, some of which may lead to the application revisiting a step that it has already visited.

More generally, we consider processes where a case passes through a series of stations. At each station, depending the specifics of the case, it is processed and then a decision is made as to where it is sent next. Each case can only occupy one stage at a time. We assume that at each stage, the rules for the decision are well defined and consistent. However, taken as a whole, the sets of choices at each stage can cause a case to return to a station it has already visited. In this paper, we call the phenomenon of returning to an earlier stage in the process a \emph{redundancy}.

We model this process graphically by representing the various stations as vertices, and the transitions of the case through the process as a edges. In other words, we represent the path of each case through the process as an Eulerian path in a multi-digraph (defined in section \ref{sec:processesandgraph}). The goal of this paper is to count the number of edge distinct loops experienced by a case in a process. Due to the overlapping nature of loops in graphs, counting the number of edge distinct loops is not straight forward. Algorithm \ref{alg:cara} gives a range of possible values for this count, while Theorem \ref{res:upperbound} gives an upper bound for these values. 

There is a long history of using graphical methods for studying processes where a single object passes from a single state to another state based on what preconditions it has met. For instance, Petri Nets \cite{Petrinets} are a powerful way to encapsulate not only where an a case could go, but also allows one to put additional constraints like time or cost to the process. Rules based processes like the ones studied in the paper are also often studied in terms of decision trees \cite{decisiontrees} which is a clear tool for understanding the rules at each station that determines the next station the case will visit. Some pieces of software can be thought of as rules bases systems, where a related measure, called cyclotomic complexity measures the complexity of the software \cite{cyclotomic}. If the rule set for deciding where the case will go next has a random component to it, but is purely a function of the current state, and not on any of the previous stations it has been to, the entire process is called Markovian \cite{Markovprocess}. There is an entire field of mathematics devoted to the study of such processes. However, tools such as these encapsulate the complexity of the process as a whole. They do not include a means of measuring how frequently a case may revisit an earlier stage in the processing stream, or how many times this may happen to a single case. 

In this paper, we do not attempt to judge a process as a whole. Rather, we give a measure of redundancy experienced by a single case, called \emph{isolated loop count} (see Definition \ref{def:isolated}) as a purely graph theoretic object. We propose that by aggregating this count for all cases passing through a system, one may get a better idea of the redundancies of the system, based on the specific outcomes that it wishes to achieve.

There are many existing methods of counting loops on graphs of various sorts, from the Euler's loop number of a graph to algorithms to identify and count oriented loops on graphs \cite{findingcycles, countingcycles} to algorithms that count the number of Eulerian circuits contained in a graph \cite{BEST}. However, to our knowledge, there is no algorithm to count the number of \emph{distinct} redundancies in a given Eulerian path. For the purposes of this paper, we say that two redundancies are isolated if they do not share a common edge. This reason for this definition is that we wish to avoid counting the same transition in the lifetime of a case undergoing a process multiple times when the case only experiences that transition once. 

Algorithm \ref{alg:cara} gives an algorithm for counting the non-overlapping circuits of an Eulerian path, which, according to Definition \ref{def:isolated}, we refer to as isolated loops in this paper. There are several subtleties in how one arrives at this quantity. Namely, a complicated process, represented as a graph, may contain many different loops. However, all of these loops may not correspond to valid sequences in the process (see Example \ref{eg:notpathrespecting}). Furthermore, for any given multi-digraph, there may be multiple different Eulerian paths in it \cite{BEST} (see Section \ref{sec:BEST}). We are only interested in counting the loops in a given path. Changing the path may give a different count. Unfortunately, there is not a unique count of this value for all graphs. In this paper we give an algorithm for calculating the range of values that this quantity can take, and conditions when this quantity is unique.

Finally, while we call the isolated loop count a measure of redundancy experienced by a case, we do not mean it in any pejorative sense. In the visa application example discussed above, it may well be possible that the application was pulled aside because it was missing certain pages and returned to the system after the appropriate materials were supplied. This may not be a negative part of the process. Or, it may be the case that the application was set aside automatically because it had been in the system for too long. Depending on the reasons for the delay, this may be a concerning inefficiency of the process. In our framework, we treat the loops that arise from both these scenarios the same.

In Section \ref{sec:comparison}, we compare the measure of redundancies developed in this paper to other algorithms for counting loops in multi-digraphs. 

\section{Eulerian paths and Isolated loops}
In this section, we introduce Eulerian paths and isolated loops. The Eulerian paths in this paper model processes, bureaucratic, industrial, or otherwise, where a person or an object has to pass through several stations in order to complete a process. One may think of this as a piece of machinery that has to go through various stages of an assembly line, with various safety checks along the way causing certain members of the population to experience a different series of steps than its peers. One may also think of the process underlying a visa application, or some other bureaucratic process in a similar vein.

While each of these processes may have a consistent set of rules governing where a particular piece of machinery or application is sent next, for large and complex systems, there may not be a global understanding of what emerges when these local decision rules are put together. The purpose of counting the isolated loops defined in this section is to get an understanding of any redundancies that may have been built into the system as a whole by putting together internally consistent transition rules at each decision point.

We begin by presenting the graph theoretical background needed to model these processes. 

\subsection{Eulerian Paths\label{sec:processesandgraph}}

A \emph{graph} $G$ is a given set of points (vertices) and lines (edges) that are connecting them. A \emph{multi-graph} is a graph where two vertices may have multiple edges between them. Two edges are adjacent if they share a vertex. The \emph{degree} of a vertex is the number of edges that are incident on said vertex.

\begin{dfn} A \emph{walk} on a multi-graph is a sequence of adjacent edges in a graph, sometimes represented as the sequence of vertices that that are the endpoints of the edges. A closed walk starts and ends at the same vertex. An open walk does not. \end{dfn}

All graphs we consider in this paper are connected. That is, there is an undirected walk from any vertex of the graph to any other.

In this paper, we say that a \emph{circuit} is a closed walk. A \emph{cycle}, or a loop, is a closed walk that does not contain any smaller closed walks as a subgraph.

A \emph{directed graph} is a graph where all the edges have an associated direction, i.e. they flow from a one vertex (called a source vertex) to another (called a target vertex). This is also called the \emph{orientation} of an edge. A \emph{multi-digraph} is a multi-graph with directed edges. Two directed edges are adjacent if the target vertex of one edge is the source vertex of the next. For a multi-digraph, one may talk about the \emph{in degree} and the \emph{out degree} of a vertex as the number of edges for which it is a target (or edges coming into the vertex) and the number of edges for which said vertex is a source (or edges coming out of the vertex). 

We say a walk on a multi-digraph respects the orientation of the graph it can be described as a sequence of adjacent edges, where adjacency is defined in terms of directed edges. I.e., if it passes from the source vertex of an edge to the target vertex of the same graph. A closed orientation respecting walk is an \emph{oriented circuit}, or an \emph{oriented cycle} if it does not contain any smaller closed walks as a subgraph. 

In this paper, we deal with multi-digraphs. In particular, we consider a special subset of multi-digraphs called Eulerian graphs. Theses are well studied objects that the interested reader can find more about in textbooks such as \cite{Graphtheorybook}. 

\begin{dfn}
An \emph{Eulerian path} on a graph is a walk that visits each edge once. The path is called open if the starting vertex is not the same at the final vertex, and a cycle if they are the same. An Eulerian path (resp. cycle) on a multi-digraph must further respect the orientations of the edges. A graph (resp. multi-digraph) is an \emph{Eulerian graph} if it contains an Eulerian path. \end{dfn} 

Note that this definition is well suited to model processes that are a sequence of transitions from one state to another. In particular, in this paper, we consider Eulerian graphs as a sequence of adjacent edges.  

\begin{lem} \label{res:edgeorder}
Given an Eulerian multi-digraph, an Eulerian path induces an ordering on edges and defines a sequence of vertices where each vertex appears at least once. \end{lem}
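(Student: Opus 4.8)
The plan is simply to unwind the definitions. An Eulerian path is by definition a walk, i.e.\ a sequence of adjacent directed edges, that uses each edge of $G$ exactly once. Writing $m$ for the number of edges of $G$, such a path is therefore a tuple $(e_1, e_2, \dots, e_m)$ in which every edge of $G$ occurs precisely once. First I would observe that this tuple \emph{is} the desired ordering: declare $e_i \prec e_j$ whenever $i < j$. Since each edge occurs exactly once, the index $i$ with $e = e_i$ is well defined for every edge $e$, so $\prec$ is a genuine total order on the edge set rather than merely a preorder.

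Next I would extract the vertex sequence. For a directed edge $e$ write $s(e)$ and $t(e)$ for its source and target. Adjacency of $e_i$ and $e_{i+1}$ in an orientation-respecting walk means exactly that $t(e_i) = s(e_{i+1})$. Hence setting $v_0 := s(e_1)$ and $v_i := t(e_i)$ for $1 \le i \le m$ produces a sequence $(v_0, v_1, \dots, v_m)$ of vertices satisfying $v_{i-1} = s(e_i)$ and $v_i = t(e_i)$ for all $i$. This is precisely the sequence of endpoint vertices attached to the path, matching the convention of representing a walk by its vertex list.

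Finally I would verify the surjectivity assertion. Because $G$ is connected and Eulerian with at least one edge, it has no isolated vertices, so every vertex $v$ is incident to some edge $e_i$. Then either $v = s(e_i) = v_{i-1}$ or $v = t(e_i) = v_i$, so $v$ occurs in $(v_0, \dots, v_m)$; thus each vertex appears at least once, which is all that is claimed. (In general the sequence repeats vertices, one occurrence for each pass of the path through the vertex, but only ``at least once'' is asserted.)

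I do not expect a real obstacle: the statement is bookkeeping on the definition of a walk. The only two points needing a word of care are that each edge appears \emph{exactly} once (so that the induced order is well defined) and the appeal to connectedness to rule out a vertex missed by every edge — without that hypothesis the vertex-coverage claim would fail for a graph with an isolated vertex, and the degenerate edgeless case (a single vertex) should be noted as trivially handled by $v_0$.
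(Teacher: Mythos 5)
Your proof is correct and follows essentially the same route as the paper's: the exactly-once traversal of edges yields the ordering, the source/target adjacency yields the vertex sequence, and connectedness guarantees every vertex appears. Your version is slightly more careful (explicitly handling well-definedness of the order and the degenerate edgeless case), but the substance is identical.
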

\begin{proof}
By the nature of an Eulerian path (passes through each edge exactly once) each edge can be assigned an ordering, corresponding to when it is traversed uniquely. 

As a walk passes through adjacent edges in a given order, the source of of an edge directly proceeds the target of said edge in the sequence. Furthermore, a vertex in the sequence directly follows another vertex in the sequence if and only if there is an edge in the graph such that the latter is the source vertex of the edge and the former the target. Since the graph is connected, every vertex appears at least once in the sequence.
\end{proof}

In this paper, our fundamental object is an Eulerian path, which we represent as a sequence of vertices in the order that they are visited. Graphically, we represent this as a multi-digraph. Therefore, even though several different Eulerian paths can give rise to the same graph, this is not a source of ambiguity in this setting. Note that in this way, the orientations of the edges of our multi-digraphs always agree with the walk we consider. 

\begin{eg} \label{eg:notpathrespecting}
Consider the digraph $G$ below. 
\bas G =   
{\xy
(-20, 0) *{\bullet} = "A" +(-2, 1)*{A},
(-10, 0) *{\bullet} = "B" +(-2, 1)*{B},
(0, 0) *{\bullet} = "C" +(-2, 1)*{C},
(10, 0) *{\bullet} = "D" +(-2, 1)*{D},
(20, 0) *{\bullet} = "E" +(-2, 1)*{E},
"A"; "B" **{\dir{-}}?/0pt/*{\dir{>}},
"B"; "C" **{\dir{-}}?/0pt/*{\dir{>}},
"C"; "D" **{\dir{-}}?/0pt/*{\dir{>}},
"D"; "E" **{\dir{-}}?/0pt/*{\dir{>}},
"D"; "B" **\crv{+(10,30)}?/0pt/*{\dir{>}},
"B"; "D" **\crv{+(-10,-30)}?/0pt/*{\dir{>}},
"C"; "C" **\crv{+(-7,5)&+(7,9)&+(7,-9)}?/0pt/*{\dir{>}},
\endxy}
\eas One can define multiple different orientations respecting Eulerian paths on this graph. Two examples are: \ba A \rightarrow B \rightarrow C \rightarrow C \rightarrow D \rightarrow B \rightarrow D \rightarrow E \;. \label{eq:path}\ea and \bas A\rightarrow B\rightarrow D\rightarrow B\rightarrow C\rightarrow C\rightarrow D \rightarrow E \;. \eas  We can also find non-orientation respecting Eulerian paths on this graphs: \bas A\rightarrow  B\rightarrow D\rightarrow C\rightarrow C\rightarrow B\rightarrow D\rightarrow E \;. \eas

However, in this paper, we start with an Eulerian path as a sequence of vertices, as in Lemma \ref{res:edgeorder} above. In particular, we write the path in \eqref{eq:path} as $ABCCDBDE$, and consider the graphs to be representations of the paths. Therefore, we avoid any ambiguity as to the ordering of the edges in the walk. 

\end{eg}

\begin{dfn}
Given an Eulerian path $P$, represented by a multi-digraph $G$, let $Q$ be a subset of $P$ such that two edges are consecutive in $Q$ only if they are consecutive in $P$. Then we say that $Q$ is a \emph{subpath} of $P$. We say that $Q$ is a \emph{subcircuit} of $P$ if $Q$ is also an oriented circuit in $G$.
\end{dfn}

In Example \ref{eg:notpathrespecting}, for $P = ABCCDBDE$, the ordered set of edges indicated by $Q = DBD$ is both a subpath and a subcircuit of $P$, while the sequence $R = DBCCD$ is not a subpath of $P$, even though it is an oriented circuit in G; the sequence $BCCD$ appears in $P$, and the edge $DB$ appears in $P$. However, while $DB$ proceeds $BCCD$ in $P$, it succeeds $BCCD$ in $R$.

\subsection{Redundancies and Isolated loops}

In this paper, the redundancies we are interested in are the occasions when a process doubles back on itself, i.e. when a vertex appears multiple times in the sequence of vertices defined by the Eulerian path in Lemma \ref{res:edgeorder}. Specifically, we are only interested in counting certain circuits of the graph  representing an Eulerian path, those defined in \ref{def:redundancy}.  
Note that this means we are not interested in every oriented cycle of a given Eulerian graph, but only those arising from a fixed Eulerian path. Neither are we interested only in subpaths of the given Eulerian paths. 

\begin{dfn}\label{def:redundancy}
A \emph{redundancy} of an Eulerian path $P$ is a sequence of edges, not necessarily a subpath, that forms an oriented cycle in $G$, such that if the target of edge $e_1$ is the source of edge $e_2$ in the oriented cycle, the edge $e_1$ proceeds $e_2$ in the path $P$. We say that a redundancy $\mathcal{C}$ is \emph{interrupted} if $\mathcal{C}$ is not a subpath of $P$.
\end{dfn}

We illustrate Definition \ref{def:redundancy} with an example.

\begin{eg} \label{eg:cycleoddities}
Consider the sequence \ba \begin{matrix} A &B &C &D &E &F &A &C &E &A \end{matrix} \;,\label{eq:subsequence}\ea  represented by the graph 
\bas 
G  = {\xy
(-14, -7) *{\bullet} = "B" +(0, -2)*{B},
(0, -7) *{\bullet} = "C" +(0, -2)*{C},
(14, -7) *{\bullet} = "D" +(0, -2)*{D},
(-7, 0) *{\bullet} = "A" +(-2, 0)*{A},
(7, 0) *{\bullet} = "E" +(2, 0)*{E},
(0, 7) *{\bullet} = "F" +(0, 2)*{F},
"A"; "B" **{\dir{-}}?/0pt/*{\dir{>}},
"B"; "C" **{\dir{-}}?/0pt/*{\dir{>}},
"C"; "D" **{\dir{-}}?/0pt/*{\dir{>}},
"D"; "E" **{\dir{-}}?/0pt/*{\dir{>}},
"E"; "F" **{\dir{-}}?/0pt/*{\dir{>}},
"F"; "A" **{\dir{-}}?/0pt/*{\dir{>}},
"A"; "C" **{\dir{-}}?/0pt/*{\dir{>}},
"C"; "E" **{\dir{-}}?/0pt/*{\dir{>}},
"E"; "A" **{\dir{-}}?/0pt/*{\dir{>}},
\endxy}
\eas

We can consider the circuit $Q = A B C D E F A C E A$ and its subcircuit $R = CDEFAC$, which are both subpaths of the original process. Note that if one removes $R$ from $Q$, then one is left with a cycle $Q \setminus R  = ABCEA$. While $Q \setminus R$ is not a subcircuit of $Q$, we wish to consider it as an interrupted redundancy of the system. We interpret this as the cycle $ABCEA$ in the modeled process having been interrupted at the point $C$ in order to perform the cycle $R$.

Notice also that the graph $G$ contains the oriented cycle $ABCEFA$. We do not consider this oriented cycle as a redundancy of our process, because the edge $CE$, which falls between the edges $BC$ and $EF$ in the oriented cycle, but after both those edges in the Eulerian path we wish to study. Therefore, the sequence $ABCEFA$ cannot be considered as a part of the Eulerian path.

\end{eg}

Note from Example \ref{eg:cycleoddities} that an interrupted redundancy may arise from a circuit being interrupted by another circuit, not just by another cycle. 

In this paper, we wish to count edge distinct redundancies. This is motivated by the idea of calculating the Euler loop number of a planar graph. If two unoriented cycles of a planar graph can be combined to form a third oriented cycle, we do not wish to count all three, as only two of the three are independent of each other. Similarly, redundancies in Eulerian paths are complicated and overlapping. We do not wish to count two redundancies from an overlapping set of edges, but only those that come from distinct portions of the process. 

Note that there are several other algorithms that count loops and oriented loops that are distinct from the algorithm developed in this paper, and not appropriate for our purposes (see section \ref{sec:comparison}). In particular,  we are not interested in counting the number of Eulerian cycles in a multi-digraph, which can be done by the BEST theorem \cite{BEST}. Nor do we want a straightforward calculation of the Euler characteristic. Instead we wish to count the number of non-overlapping portions of a given Eulerian path that form simple circuits. 

This is because, when we use Eulerian paths to represent the path of a case through a complicated process, we are interested in understanding when it returns to a point that it has already visited in the system. However, we wish to respect the transitions experienced by the case, i.e. the edges of the Eulerian path, as the fundamental object. Therefore we do not permit the same edge or transition occur multiple times among the redundancies we count. 

\begin{dfn} Two redundancies are called \emph{isolated} if they do not share an edge with each other, i.e. are edge distinct.\label{def:isolated}\end{dfn} 

In order to count isolated loops, we need to classify subcircuits of an Eulerian path into three categories:

\begin{dfn}
Given an Eulerian path $P$, two subcircuits $Q$ and $R$ are related in one of the following three ways.
\begin{enumerate}
\item The subcircuits $Q$ and $R$ are \emph{disjoint} if one circuit occurs entirely before the other. 
\item The subcircuits $Q$ and $R$ are \emph{nesting} if $R$ is a subcircuit of $Q$. If there is no other subcircuit of $P$, say $S$, such that $R$ is a subcircuit of $S$ which itself is a subcircuit of $Q$, then we say that $Q$ is the parent of $R$ or that $R$ is the daughter of $Q$.
\item The subcircuits $Q$ and $R$ are \emph{overlapping} if the corresponding intervals have a non-trivial intersection but one is not contained in the other. I.e. overlapping circuits are neither nested nor disjointed. 
\end{enumerate}
\end{dfn}

Note that the nesting relationship defines a poset, or partially ordered set, relationship on the circuits. Namely, a circuit may have multiple parent circuits. Note that if a circuit has more than one parent, the parents are, by construction, overlapping. This is because, if $P$ and $Q$ are both parents of $S$, then by construction, $S$ is contained in both $P$ and $Q$. For instance, consider the following Eulerian path: \bas \setcounter{MaxMatrixCols}{15} P = \begin{matrix} A & B & C & D & E & E & B & F & G & D &  H & \\  & [ &  &  &  &  & ] &  &  &  &  &  & (Q) \\  &  &  & [ &  &  &  &  &  & ] &  &  & (R) \\ &  &  & & [ & ] &  &  &  &  &  &  & (S)  \end{matrix} \;. \eas 
 
Note that if $Q$ and $R$ are overlapping subcircuits, $Q \cup R$ may not be a circuit, but a subpath of $P$ (it need not start and end at the same vertex). If there is a subcircuit $S$ containing $Q \cup R$, we say that $Q \cup R$ nests in $S$, and if $S$ is the smallest such subcircuit, we say that $S$ is the parent of $Q \cup R$, or that $Q \cup R$ is the daughter of $S$.

\begin{eg}Consider the Eulerian path defined \bas  \begin{matrix} A &B &C &C &D &B &D &E &  \end{matrix} \eas represented by the graph $G$ as in example \ref{eg:notpathrespecting}. 

Recall that two nested circuits are considered isolated, they are edge disjoint. Similarly, two disjoint circuits are isolated as they cannot share and edge. However, two overlapping circuits are not, as they are not edge disjoint by construction.

\bas  
G = {\xy
(-20, 0) *{\bullet} = "A" +(-2, 1)*{A},
(-10, 0) *{\bullet} = "B" +(-2, 1)*{B},
(0, 0) *{\bullet} = "C" +(-2, 1)*{C},
(10, 0) *{\bullet} = "D" +(-2, 1)*{D},
(20, 0) *{\bullet} = "E" +(-2, 1)*{E},
"A"; "B" **{\dir{-}}?/0pt/*{\dir{>}},
"B"; "C" **{\dir{-}}?/0pt/*{\dir{>}},
"C"; "D" **{\dir{-}}?/0pt/*{\dir{>}},
"D"; "E" **{\dir{-}}?/0pt/*{\dir{>}},
"D"; "B" **\crv{+(10,30)}?/0pt/*{\dir{>}},
"B"; "D" **\crv{+(-10,-30)}?/0pt/*{\dir{>}},
"C"; "C" **\crv{+(-7,5)&+(7,9)&+(7,-9)}?/0pt/*{\dir{>}},
\endxy}
\eas

There are several Eulerian subcircuits in this Eulerian path, namely: $BCCDB$, $CC$, and $DBD$. We write these using brackets in the full Eulerian path, \bas \begin{matrix} A &B &C &C &D &B &D &E &  \\ & [ & &  & &] & & &(1)\\ & & [ & ] & & & & &(2)\\ & & &  &[ & &] & &(3) \end{matrix} \;. \eas Note that circuit $(1)$ is not a cycle, while the circuits (2) and (3) are both cycles. 

Next, note that cycle (2) is nested inside circuit (1), circuit (1) and cycle (3) overlap, while (2) and (3) are disjoint.  In this example, since the sequence $(1) \setminus (2)$ is a cycle, we consider this an interrupted redundancy of the path. 

Therefore, in this example, we see that there are exactly 2 isolated loops in the path, either $(1) \setminus (2)$ and (2) or (2) and (3).

Note that while (1) and (3) are overlapping, the union $(1) \cup (3)$ is not a circuit. However, as noted above, we will consider the subpath $(1) \cup (3)$ as the daughter of $P$. 
\end{eg}

It is important to observe from the above example that counting isolated loops gives us a number of edge distinct cycles in the graph corresponding to an Eulerian path, not which cycles are isolated. In general, there are many possible sets of isolated loops in any given Eulerian path.

Unfortunately, depending on how one counts loops, there is not a unique number of isolated loops. Next, we give a useful definition,and then the  algorithm to count the minimum (resp. maximum) number of isolated loops in a path. 

\begin{dfn}\label{def:chi}
Given an Eulerian path $P$, let $\chi(P) = \begin{cases} 1 & \textrm{if } P \textrm{ is not a circuit} \\ 0 & \textrm{if } P  \textrm{ is a circuit}\end{cases}$.
\end{dfn}

\begin{algorithm} \label{alg:cara}
Given an Eulerian path $P$, represented by a graph $G$, one can find the minimum (resp. maximum) number of isolated loops in said path by the following algorithm
\begin{enumerate}
    \item For each subcycle $C$ in $P$ give it a weight $1$. Denote this $\omega(C)  = 1$.
    \begin{enumerate}
        \item If $P$ does not contain any subcycles, define $\omega(P) = 0$.
    \end{enumerate}
    \item Let $R = \{R_j\}$ be a set of disjoint weighted paths of $P$, with a common unweighted parent $Q \subsetneq P$. If the parent $Q$ is an interrupted redundancy (i.e. $Q \setminus R$ is an oriented cycle in $G$) or is all of $P$, give $Q$ the weight $\omega(Q) = \sum_{R_j \in R} \omega(R_j) + 1$.
    \item Let $S = \{S_j\}$ be a maximal set of overlapping redundancies such that no element is a parent of another. That is, every redundancy in $P$ is either in $S$ or disjoint from the redundancies in $S$.  
    \begin{enumerate}
        \item For each pair $S_i$, $S_j \in S$, that have been assigned a weight, if $S_i$ and $S_j$ are disjoint, remove the elements $S_i$ and $S_j$ from $S$ and replace with the union. \bas S = S \setminus \{S_i, S_j\} \cup (S_i \cup S_j)\;.\eas
        Assign the weight $\omega(S_i \cup S_i) = \omega(S_i) + \omega(S_j)$.
        \item For each overlapping pair $S_i$, $S_j \in S$ that have been assigned a weight, assume without lack of generality that $\omega(S_i) \leq \omega(S_j)$, drop the redundancy with the greater (resp. smaller) weight: \bas S = S \setminus S_j  \; \quad \; \textrm{(resp.  } S = S \setminus S_i \;\textrm{)}\eas
        \item If $S$ does not contain multiple elements, then the weight of $S$ is the weight of its single member.
    \end{enumerate}
    \item Assign to the path $P$ the weight \bas \omega(P) = \sum_{\substack{S \textbf{ maximal set of} \\ \textrm{overlapping redundancies }}}\omega(S) + \chi(P) \eas
\end{enumerate}
The weight of $P$ gives the maximal (resp. minimal) number of isolated loops in the path.
\end{algorithm}

Note that the difference in the algorithm for computing the maximal and minimal weights comes at the choice of weight associated the an overlapping set of circuits. Namely, every time there is a set of overlapping circuits, there is a choice to be made of which to keep and which to disregard. In order to calculate the maximal weight, consistently disregard the circuit with the smaller weight. In order to calculate the minimal weight, consistently disregard the circuit with the larger weight.

\begin{proof}
This algorithm gives a minimal (resp. maximal) weight to the Eulerian path $P$. It remains to check that every redundancy is considered in this algorithm, and that the algorithm returns the minimal (resp. maximal) count.

Let $G$ be the oriented multi-directed graph representing $P$. 

If $C$ is subcycle of $P$, then, by definition of a cycle, it does not contain any further redundancies. Therefore, the weight of $C$ is $1$.

The algorithm proceeds by induction on the poset structure on $P$ introduced by inclusion. 

Let $Q$ be an unweighted subcircuit of $P$ such that all of its daughters are weighted.  First, we consider the case when $R$ is the set of daughters of $Q$, and $Q \setminus R$ is an oriented circuit of $G$. We claim this implies that $Q\setminus R$ then is a cycle. If not, one could write $Q \setminus R = Q_1 \cup Q_2$ with $Q_1$ and $Q_2$ two subcircuits of $G$ and redundancies of $P$. Since $Q_1$ and $Q_2$ are comprised of adjacent edges of $G$, this implies that some daughter $R_i \in R$ must be either a subcircuit of $Q_1$ or $Q_2$. That is, $Q$ is not the parent of $R_i$. 

The algorithm gives the parent, $Q$, a weight of one more than the sum of all its disjoint daughters. That is, we interpret this situation as a parent redundancy being interrupted by many disjoint daughter circuits. Therefore, count all the redundancies in each of the disjoint daughters and add one for the parent, it the parent is also a redundancy.


Given a set of overlapping redundancies, $S$, there may be multiple subset that do not overlap with each other. As above, replacing $P$ with $S$, these weights can be calculated independently, and then summed. In order to assign a weight to the subpath $S$, we preform a pairwise comparison of the union of these disjoint elements. Since taking the minimum (resp. maximum) is transitive, this is guaranteed to give the correct weight to the path.  
\end{proof}

We note that Algorithm \ref{alg:cara} only gives a minimal (resp. maximal) count for the number of isolated loops in any Eulerian path $P$. It does not indicate which loops are included in the count.

\begin{eg} \label{eg:nonunique}
Consider the path  \bas  P = \setcounter{MaxMatrixCols}{15} \begin{matrix} A & B & C & D & E & B & F & G & D & C & F& H & \\  & [ &  &  &  & ] &  &  &  &  &  &  & (1) \\  &  &  &  &  &  & [ &  &  &  & ] &  & (2) \\ &  & [ &  &  &  &  &  &  & ] &  &  & (3) \\  &  &  & [ &  &  &  &  & ] &  &  &  & (4) \\ \end{matrix} \;. \eas  and its graph representation \bas
G = {\xy
(-30, 0) *{\bullet} = "A" +(-2, -2)*{A},
(-10, 0) *{\bullet} = "B" +(-2, -2)*{B},
(0, 10) *{\bullet} = "C" +(-3, 1)*{C},
(0, 20) *{\bullet} = "D" +(-3, -2)*{D},
(-20, 15) *{\bullet} = "E" +(-2, 1)*{E},
(20, 15) *{\bullet} = "G" +(2, 1)*{G},
(10, 0) *{\bullet} = "F" +(2, -2)*{F},
(30, 0) *{\bullet} = "H" +(2, -2)*{H},
(-10, 25) = "X",
(10, 25) = "Y",
"A"; "B" **{\dir{-}}?/0pt/*{\dir{>}},
"B"; "C" **\crv{+(-3,-6)}?/0pt/*{\dir{>}},
"C"; "D" **\crv{+(3,-5)}?/0pt/*{\dir{>}},
"X"; "D" **\crv{+(-3, 5)},
"X"; "E" **\crv{+(1, 10)}?/-10pt/*{\dir{>}},
"E"; "B" **\crv{+(-8,4)}?/0pt/*{\dir{>}},
"B"; "F" **\crv{+(-10,-5)}?/0pt/*{\dir{>}},
"Y"; "D" **\crv{+(3, 5)},
"Y"; "G" **\crv{+(-1, 10)}?/-10pt/*{\dir{<}},
"D"; "C" **\crv{+(-3,5)}?/0pt/*{\dir{>}},
"F"; "C" **\crv{+(3,-6)}?/0pt/*{\dir{<}},
"G"; "F" **\crv{+(10,5)}?/0pt/*{\dir{<}},
"F"; "H" **{\dir{-}}?/0pt/*{\dir{>}}
\endxy} \;.
\eas  In this graph, the path $P$ is the parent of the cycles $(1)$ and $(2)$, the circuit $(3)$, and the  overlapping redundancies $(1) \cup (3)$, $(2) \cup (3)$ and $(1 \cup (2) \cup (3)$. The circuit $(3)$ is the parent of the cycle $(4)$. Note that $(1)$, $(2)$ and $(4)$ are cycles in this path, but $(3)$ is not. The algorithm assigns a weight of one to the cycles $(1)$, $(2)$ and $(4)$. 

Since $P$ is the parent of $(1)$ and $(2)$, there is only one choice for $R$: $R = \{(4)\}$. The cycle $(4)$ has an unweighted parent $(3)$ which is given weight $2$. Since the parent of $(4)$ is $P$, we stop this step.

Furthermore, there is only one choice for $S$: $S = \{(1), (2), (3)\}$. Since $(1)$ and $(2)$ are disjoint, they are replaced with the path $(1) \cup (2)$, and which has weight $\omega ((1) \cup (2)) = \omega (1) + \omega (2) = 2$. Now redefine $S = \{ (1) \cup (2), (3)\}$. Since both elements of $S$ have the same weight, it does not matter which is removed by the algorithm, only that one is. Then $S$ has weight $2$.

Since $P$ is a parent of $S$, and $P$ is not a circuit, $\omega(P) = 2$. Note that Algorithm \ref{alg:cara} does not specify which are the isolated loops in $P$. Either the set $\{(1) = BCDEB, (2) = FGDCF \}$ or $\{(3) = DEBFGD, (4) = CDEBFGDC \}$ qualify. 

\end{eg}

It is useful to give an easily calculable upper bound to the maximal number isolated loops for an Eulerian path given by Algorithm \ref{alg:cara}.

\begin{thm}\label{res:upperbound}
Consider an Eulerian path, $P$, represented by a multi-digraph $G$. Let $V_{>1}(G)$ be the set of vertices of $G$ with degree greater than $1$. Denote by $\deg(v)$ the degree of the vertex $v$, and $\omega_{\max}(P)$ the maximal number of isolated loops in $P$. Then \bas \omega_{\max}(P) \leq \sum_{v \in V_{>1}(G)} (\frac{\deg(v)}{2} - 1) + \chi(P) \;.\eas 
\end{thm}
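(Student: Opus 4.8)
The plan is to reduce the inequality to a purely local count at each vertex. By Definition~\ref{def:isolated} and Algorithm~\ref{alg:cara}, any set of isolated loops achieving the count $\omega_{\max}(P)$ is in particular a family $\mathcal F$ of pairwise \emph{edge-disjoint} redundancies of $P$, so it suffices to prove the bound for $|\mathcal F|$ with $\mathcal F$ an arbitrary such family. To each $\mathcal C\in\mathcal F$ I attach a \emph{pivot vertex} $p(\mathcal C)$. Writing $\mathcal C$ as an edge sequence $f_1,\dots,f_m$ as in Definition~\ref{def:redundancy}, the defining condition forces the $f_j$ to occur along $P$ in strictly increasing order $f_1<f_2<\cdots<f_m$: the only adjacency of the oriented cycle on which the condition cannot be imposed is the wrap-around $f_m\to f_1$, since the linear order of Lemma~\ref{res:edgeorder} admits no strictly increasing cyclic sequence. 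Define $p(\mathcal C):=\mathrm{target}(f_m)=\mathrm{source}(f_1)$; then $f_1$ is an out-edge of $p(\mathcal C)$, $f_m$ is an in-edge of $p(\mathcal C)$, and for $m\ge 2$ the edge $f_1$ strictly precedes $f_m$ in $P$.

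Next I would fix a vertex $v$ and bound $\#\{\mathcal C\in\mathcal F:\,p(\mathcal C)=v\}$. Let $d^-(v),d^+(v)$ be the in- and out-degree of $v$, so $d^-(v)+d^+(v)=\deg(v)$, and list the in-edges of $v$ as $i_1,\dots,i_{d^-(v)}$ and the out-edges as $o_1,\dots,o_{d^+(v)}$, each in the order $P$ uses them. For $\mathcal C$ pivoting at $v$, record the pair $(f_1(\mathcal C),f_m(\mathcal C))$. Because $\mathcal F$ is edge-disjoint, this map is injective in each coordinate, so the recorded pairs form a matching in the bipartite graph $H_v$ on $\{o_a\}\sqcup\{i_b\}$ with an edge joining $o_a$ to $i_b$ exactly when $o_a$ precedes $i_b$ in $P$. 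Hence $\#\{\mathcal C\in\mathcal F:\,p(\mathcal C)=v\}\le\nu(H_v)$, the matching number of $H_v$.

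The crux is then that the Eulerian path rigidly interleaves the edges at $v$. If $v$ is an interior vertex of $P$, the edges incident to $v$ occur along $P$ alternately arrival, departure, arrival, departure, \dots, beginning with an arrival and ending with a departure; as $d^-(v)=d^+(v)=\deg(v)/2=:k$ this is the pattern $i_1,o_1,i_2,o_2,\dots,i_k,o_k$, so $o_a$ precedes $i_b$ iff $a<b$. Thus $H_v$ is the bipartite graph on $[k]\sqcup[k]$ with edge set $\{(a,b):a<b\}$, and $\nu(H_v)=k-1=\deg(v)/2-1$: the pairs $(a,a+1)$ give a matching of that size, while a matching of size $k$ would require a bijection $\sigma$ of $[k]$ with $a<\sigma(a)$ for all $a$, impossible on summing. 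If $v$ is the start (resp. end) vertex of an \emph{open} path, the pattern gains one extra leading departure (resp. trailing arrival), which is then isolated in $H_v$, and one finds $\nu(H_v)=d^-(v)=\deg(v)/2-\tfrac12$ (resp. $d^+(v)=\deg(v)/2-\tfrac12$). Summing over all vertices gives
\[
\omega_{\max}(P)\;=\;|\mathcal F|\;=\;\sum_{v}\#\{\mathcal C\in\mathcal F:\,p(\mathcal C)=v\}\;\le\;\sum_{v}\nu(H_v)\;=\;\sum_{v}\Bigl(\tfrac{\deg(v)}{2}-1\Bigr)+\chi(P),
\]
the final term being the total half-integer surplus from the two boundary vertices ($\tfrac12+\tfrac12=1$) when $P$ is open, and $0$ when $P$ is a closed circuit, which is exactly $\chi(P)$. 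Finally, replacing $\sum_v$ by $\sum_{v\in V_{>1}(G)}$ only enlarges the right-hand side (a degree-$1$ vertex contributes $-\tfrac12<0$, and degree $0$ is impossible in a connected graph on $\ge2$ vertices), which yields the asserted inequality.

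The main obstacle I anticipate is twofold. First, one has to make the notion of a pivot airtight and deal with the degenerate case of a redundancy that is a single self-loop at $v$ (there $f_1=f_m$, so the recorded pair need not even be an edge of $H_v$); these must be shown to still respect the per-vertex bound, and it is precisely their interaction with the first and last visits to the endpoints --- together with the closed-path case, where the cyclic base-point needs separate attention --- that supplies the $\chi(P)$ term. Second, although the matching computation for $H_v$ is elementary, it must be carried out uniformly across interior, start, and end vertices so that the half-integer contributions sum to exactly $1$; it is this bookkeeping, rather than any deep idea, that determines whether the constant in the bound comes out right. I would also want to confirm at the outset that the quantity produced by Algorithm~\ref{alg:cara} is indeed (at most) the maximal size of an edge-disjoint family of redundancies, since that is the only feature of the (rather intricate) isolated-loop count the argument relies on.
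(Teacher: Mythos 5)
Your argument takes a genuinely different route from the paper's. The paper's proof is a two-line global count: it asserts that $\frac{\deg(v)}{2}$ is the number of visits of $P$ to $v$, hence $\frac{\deg(v)}{2}-1$ the number of returns to $v$, and that summing these returns counts \emph{all} redundancies, of which any family of isolated loops is a subset. You instead bound an arbitrary pairwise edge-disjoint family of redundancies directly, by charging each redundancy to the unique vertex at which its edge sequence wraps around in the order of Lemma \ref{res:edgeorder}, and then capping the charge at each vertex by the matching number of a bipartite precedence graph on its in- and out-edges. This is more work but also more honest: it explains why \emph{edge-disjointness} (rather than the mere existence of returns) limits the count at each vertex, and it accommodates interrupted redundancies, which are not subpaths and so are not obviously ``circuits starting from $v$'' in the paper's sense. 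For interior vertices and for the two endpoints of an open path your computation is correct, including the bookkeeping that turns the two half-integer surpluses into $\chi(P)$; the self-loop degeneracy you worry about is comparatively harmless, since a self-loop at $v$ occupies one arrival--departure slot in the interleaving and can only be the entirety of a redundancy pivoting there, so it is absorbed by the same count.

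The gap you flag at the base point of a closed Eulerian circuit, however, is genuine and is not mere bookkeeping. If $P$ is a circuit based at $v_0$, the interleaving at $v_0$ begins with a departure and ends with an arrival, so $o_a$ precedes $i_b$ iff $a\le b$ and $\nu(H_{v_0})=\frac{\deg(v_0)}{2}$, one more than the theorem allots to $v_0$, while $\chi(P)=0$ supplies no compensation. This is not slack in your analysis: for the circuit $P=ABCDEFACEA$ of Example \ref{eg:cycleoddities}, the edge-disjoint redundancies $ABCDEFA$ and $ACEA$ both pivot at $A$, which has degree $4$, so the local bound $\frac{\deg(A)}{2}-1=1$ is actually attained and exceeded at that vertex; the stated inequality survives there only because of slack at other vertices. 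Consequently the closed case cannot be finished by a purely local argument --- one needs either a global compensation step or a restriction to open paths. (Indeed, for the circuit $ABCA$ the right-hand side is $0$ while $P$ is itself a redundancy, so the statement is delicate in the closed case --- a difficulty the paper's own proof silently elides by asserting that $\frac{\deg(v)}{2}-1$ counts the circuits based at the initial vertex of a circuit as well.) Your opening reduction --- that $\omega_{\max}(P)$ is at most the maximal size of a pairwise edge-disjoint family of redundancies --- is immediate from Definition \ref{def:isolated} and needs no further justification.
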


\begin{proof}
We claim that $\sum_{v \in V_{>1}} (\frac{\deg(v)}{2} -1)$ give the total number of redundancies of a path $P$, not just the isolated loops. Therefore, it is an upper bound.

Indeed, since $P$ is an Eulerian path, every vertex, except possibly the start and end point of $P$ has even degree. If $P$ is an Eulerian circuit, then its the initial vertex also has even degree. For even vertices with even degree, $\frac{\deg(v)}{2}$ counts the number of times a path visits that vertex. In other words, $\frac{\deg(v)}{2} - 1$ counts the number of circuits starting from $v$. Summing this gives give the total number of subcircuits in the path. If $P$ is not an Eulerian circuit, then the first and last vertices each contribute $-\frac{1}{2}$ to this sum, which is accounted for by the indicator function $\chi(P)$. 
\end{proof}

\section{Comparison to other measures\label{sec:comparison}}

In this section, we discuss several measures of graph redundancy commonly used in the literature, and compare them to the measure developed in this paper. To see the differences between the different redundancy measures described in this section, we refer back to the following example. 

\begin{eg}\label{eg:twoloopgraph}
As a running example in this section, consider the Eulerian path and graph as in Example \ref{eg:nonunique}: \bas A \; B \; C \; D\;E\;B\;F\;G\;D\;C\;F\;H \;,\eas represented by 

\bas  
G = {\xy
(-30, 0) *{\bullet} = "A" +(-2, -2)*{A},
(-10, 0) *{\bullet} = "B" +(-2, -2)*{B},
(0, 10) *{\bullet} = "C" +(-3, 1)*{C},
(0, 20) *{\bullet} = "D" +(-3, -2)*{D},
(-20, 15) *{\bullet} = "E" +(-2, 1)*{E},
(20, 15) *{\bullet} = "G" +(2, 1)*{G},
(10, 0) *{\bullet} = "F" +(2, -2)*{F},
(30, 0) *{\bullet} = "H" +(2, -2)*{H},
(-10, 25) = "X",
(10, 25) = "Y",
"A"; "B" **{\dir{-}}?/0pt/*{\dir{>}},
"B"; "C" **\crv{+(-3,-6)}?/0pt/*{\dir{>}},
"C"; "D" **\crv{+(3,-5)}?/0pt/*{\dir{>}},
"X"; "D" **\crv{+(-3, 5)},
"X"; "E" **\crv{+(1, 10)}?/-10pt/*{\dir{>}},
"E"; "B" **\crv{+(-8,4)}?/0pt/*{\dir{>}},
"B"; "F" **\crv{+(-10,-5)}?/0pt/*{\dir{>}},
"Y"; "D" **\crv{+(3, 5)},
"Y"; "G" **\crv{+(-1, 10)}?/-10pt/*{\dir{<}},
"D"; "C" **\crv{+(-3,5)}?/0pt/*{\dir{>}},
"F"; "C" **\crv{+(3,-6)}?/0pt/*{\dir{<}},
"G"; "F" **\crv{+(10,5)}?/0pt/*{\dir{<}},
"F"; "H" **{\dir{-}}?/0pt/*{\dir{>}}
\endxy}
\eas

\end{eg}
 
\subsection{Euler's loop number}
Euler’s loop number is used to count the number of independent loops in a given \emph{planar graph}, i.e. a graph that can be drawn on a piece of paper such that none of its edges cross. The number of independent loops, $L$, of such a graph is given by Euler's formula $L=E-V+C$, where $E$ is the number of edges in a graph, $V$ is the number of vertices, and $C$ the number of connected components of the graph. For more on planar graphs and Euler's loop number, see \cite[Chapter 5]{Graphtheorybook}. Note that in this paper, as we only consider connected graphs, $C = 1$. This quantity is closely related to the Euler characteristic of a graph. Namely, the quantity $L + 1$ counts the number of 2 dimensional spaces, or \emph{faces}, the graph divides the plane into, including the infinite space on the outside of the graph. This quantity is constant no matter how the graph is drawn in the plane, provided the edges remain non-intersecting. Therefore, the quantity $L$ gives the number of finite faces defined by the planar graph $G$, or independent loops.

Note that while each finite face defined by the planar graph $G$ corresponds to a cycle within $G$, not all cycles are counted in this manner. In particular, any simple cycle that is formed by the union of these finite faces is not included in this map. For this reason, one often says that $L$ gives the number of \emph{independent} loops in a planar graph. Note that unlike the definition of isolated loops, the independent loops may have overlapping edges amongst the loops. 

There are two main ways in which the quantity $L$ differs from the isolated loop count given in Algorithm \ref{alg:cara}. First, Euler's loop number does not require an orientation on the edges, which the Isolated loop count, built off an Eulerian path, does. Therefore, in the graph in Example \ref{eg:twoloopgraph}, the number of independent loops of $L$ is $4$. From the embedding given in the example, these are $BCDEB$, $CDC$, $FGDCG$ and $BFCB$. First note that a different embedding would give rise to a different set of faces, and thus a different set of simple cycles. Also, note that this last cycle listed above is not one of the oriented cycles of the graph $G$.

The other difference between Euler's loop number and the Isolated loop count is that the former is only defined on planar graphs, while there is no reason than an Eulerian path needs to be planar (for example, consider $K_5$.\footnote{It can be shown (for instance \cite[Corollary 13.4]{Graphtheorybook}) that if $K_5$ were a planar graph, its edges and vertices would satisfy $E \geq 3V - 6$. However, $K_5$ has $10$ edges and $5$ vertices, so this relation does not hold, and thus the graph cannot be planar.}  Specifically, the minimum isolated loop count of $K_5$ is 1, while the maximum is $2$, while the number of independent loops is not defined. We do, however, note that there is a generalization of the Euler loop number, performed by relating the Euler characteristic of a graph to the genus of the space it is embedded in. However, such analysis is beyond the scope of this paper.

\subsection{Counting oriented loops}

All the graphs considered in this paper arise from Eulerian paths, and therefore are multi-digraphs. There are many algorithms that exist that count the number of oriented loops in such a graph \cite{countingcycles, findingcycles}. However, these algorithms do not require the underlying graphs to be Eulerian. That is, given an Eulerian path, these algorithms count the directed cycles without regard to whether or not the edges of that cycle respect the order imposed on the graph by said path. For instance, the cycle $CDC$ in the graph in Example \ref{eg:nonunique} is a valid simple oriented cycle. However, this oriented cycle does not appear in the Eulerian path that we wish to study, though it may appear in other paths that also correspond to this graph. 

Furthermore, the algorithms to count oriented cycles of a graph count \emph{all} simple oriented cycles of a graph, regardless of whether or not they share edges with each other. Therefore, both the simple loops $BCDEB$ and $DEBFGD$ in the graph in Example \ref{eg:nonunique} are counted by algorithms counting oriented loops while Algorithm \ref{alg:cara} counts at most one of them. 

Finally, it is worth noting that unlike for Euler loop number, the algorithms for counting oriented loops do not require the graphs to be planar. Furthermore these algorithms do not have a concept of independence of loops. Rather they count all oriented cycles of the graph. 

\subsection{In Degree and Out degree}

When studying Eulerian paths, we know that every vertex, barring possibly the first and last, has even degree, as every vertex, except the first and the last has the same number of edges entering it as leaving it: \bas \frac{\deg(v)}{2} = \textrm{in degree} = \textrm{out degree} \;.\eas 

We may use this property of Eulerian paths to give an upper bound for the isolated loop count:  \bas (\frac{\deg(v)}{2} - 1) + \chi(P) \eas as  given in Theorem \ref{res:upperbound}). Furthermore, considering the in degree of an Eulerian graph gives a lower bound on the number of oriented cycles of that graph. First, because we consider Eulerian graphs that need not be Eulerian circuits, we need to generalize the concept of in and out degrees: \begin{enumerate}
    \item If $G$ is an Eulerian circuit, then $\frac{\deg(v)}{2}$ counts the number of times that a vertex is visited.
    \item If $G$ is not a Eulerian circuit but still an Eulerian path, then $\frac{\deg(v)}{2}$ counts the number of times the path visits a vertex for all vertices but the first and last, while $\frac{\deg(v)}{2} + \frac{1}{2}$ counts the number of times the path encounters the first and last vertices. 
\end{enumerate}
In other words, let $G$ be a multi-digraph with edges oriented to respect an Eulerian path on it. For any vertex $G$, $\lfloor \frac{\deg(v)}{2} \rfloor$ gives the number of Eulerian circuits starting at that vertex. Therefore, the expression $\sum_{v \in V(G)} (\frac{\deg(v)}{2} - 1) + \chi(P)$ from Theorem \ref{res:upperbound} counts the number of times an Eulerian path returns to a vertex $v$. As such, it is a lower bound for the number of oriented cycles in an Eulerian graph. 

For a graph that cannot represent an Eulerian path, counting the in or out degree of a vertex does not yield any useful information.

\subsection{Counting Eulerian circuits \label{sec:BEST}}

There is a formula for counting the number of Eulerian circuits in an Eulerian graph \cite{BEST}. In our more general case, if we are considering a graph, as in Example \ref{eg:twoloopgraph} that is defined by an Eulerian path with distinct endpoints, we may turn this into an Eulerian circuit by adding a vertex $X$ and two edges, one from $X$ to the start of the path and one from the final vertex to $X$. The resulting multi-digraph is an Eulerian circuit. 

Given an Eulerian graph, there may be many Eulerian circuits on it. The de Bruijn, van Aardenne-Ehrenfest, Smith and Tutte (BEST) theorem \cite{BEST} counts the number of distinct Eulerian circuits on a given Eulerian graph. This is can be expressed in terms of the in degrees of all the vertices of the graph given by the formula \bas ec(G) = t(G) \prod_{v \in V(G)}(\frac{\deg(v)}{2}-1)! \;,\eas where $t(G)$ is the number of oriented spanning trees flowing towards a fixed vertex in $G$, and $ec(G)$ represents the number of Eulerian circuits. This number is unique for Eulerian graphs, and can be computed by Kirchhoff's theorem for directed multi-graphs. In the running example we have three different Eulerian circuits that are compatible with the edge orientation of $G$. The edges of the corresponding trees are indicated by the $\textrm{--}$ in the ordered sequence of vertices below. \bas X \; A \textrm{--} B \; C \; D\;E\textrm{--}B\textrm{--}F\;G\textrm{--}D\textrm{--}C\textrm{--}F\textrm{--}H\textrm{--} X \\ X\; A \textrm{--} B \; C \; D\;C\textrm{--}F\;G\textrm{--}D\textrm{--}E\textrm{--}B\textrm{--}F\textrm{--}H\textrm{--}X \\ X \; A \textrm{--} B \; F \; G\textrm{--}D\;E\textrm{--}B\;C\;D\textrm{--}C\textrm{--}F\textrm{--}H \textrm{--}X \eas 

First note that the BEST theorem gives the number of distinct Eulerian circuits that can traversed in a given graph. Unlike any of the other methods discussed in this section, this theorem does not return a number of simple oriented cycles. 

Finally, note that in order to perform this calculation, one had to consider the graph $G$ as an multi-digraph, not as an Eulerian path, i.e. without the ordering of the edges of the graph. This is very different from the situation in Algorithm \ref{alg:cara}, where the count depends on the order in which the edges are encountered.

\end{document}